\providecommand{\url}[1]{#1}
\tikzstyle{hyb}=[rectangle,fill=green!50,draw,minimum size=3.5mm]
\tikzstyle{tre}=[circle,fill=green!50,draw,minimum size=3.5mm]
\newcommand{\etq}[1]{%
\draw (#1) node {$#1$};
}
\newcommand{\contained}{\preccurlyeq}
\newcommand{\pathgr}{\!\rightsquigarrow\!{}}
\renewcommand{\leq}{\leqslant}
\renewcommand{\geq}{\geqslant}
\begin{document}

\title{Two metrics for general phylogenetic networks}

\author{Gabriel Cardona\inst{1} \and Merc\`e Llabr\'es\inst{1} \and Francesc Rossell\'o\inst{1} \and
Gabriel Valiente\inst{2,3} } \authorrunning{G. Cardona et al.}
\institute{Department of Mathematics and Computer Science, University
of the Balearic Islands, E-07122 Palma de Mallorca,
\{\texttt{gabriel.cardona,merce.llabres,cesc.rossello}\}\texttt{@uib.es}
 \and
Algorithms, Bioinformatics, Complexity and Formal Methods Research
Group, Technical University of Catalonia, E-08034 Barcelona,
\texttt{valiente@lsi.upc.edu} }

\maketitle

\begin{abstract}
We prove that Nakhleh's latest   `metric' for phylogenetic networks separates distinguishable phylogenetic networks, and that a slight modification of it
provides a true distance on the class of all phylogenetic networks.
\end{abstract}
%

\section{Introduction}
\label{sec:intro}

L. Nakhleh has recently proposed a dissimilarity measure for the comparison of
phylogenetic networks \cite{nakhleh:2007}, but he has only proved that
it satisfies the separation axiom for metrics (zero distance means
isomorphism) on the class of all \emph{reduced} phylogenetic networks
in the sense of \cite{moret.ea:2004}.  And, although we show that this measure
separates phylogenetic networks more general than the reduced
ones (for instance, the tree-child phylogenetic networks \cite{cardona.ea:07b}), it does not satisfy the separation axiom on the whole class of
all phylogenetic networks (see Remark~\ref{rem2} below).

In this note we complement Nakhleh's work in two directions.  On the
one hand, we prove that, for this dissimilarity measure, zero distance implies
indistinguishability up to reduction in the sense of
\cite{moret.ea:2004}, a goal that had already been pursued by
Moret-Nakhleh-Warnow et al in \textsl{loc.~cit.}, failing in their attempt
\cite{cardona.ea:07a}.  In this way, and to the best of our
knowledge,  Nakleh's dissimilarity measure turns out to be is the first one  that separates distinguishable networks.
And, on the other hand, we show that a slight
modification of Nakhleh's definition does yield a true distance on the
whole class of all phylogenetic networks.  Again to the best of our
knowledge, this is the first true metric defined on this class.

\subsection{Notations}
\label{sec:prel}

Let $N=(V,E)$ be a DAG (a finite directed acyclic graph).  We say that a node
$v\in V$ is a \emph{child} of $u\in V$ if $(u,v)\in E$; we also say
then that $u$ is a \emph{parent} of $v$.  We say that a node is a
\emph{tree node} when it has at most one parent, and that it is a \emph{hybrid node} when it
has more than one parent.  A node that is not a leaf is called
\emph{internal}.  A DAG is \emph{rooted} when it has only one
\emph{root}: a node without parents.

A \emph{path} in $N$ is a sequence of nodes $(v_0,v_1,\dots,v_k)$ such
that $(v_{i-1},v_{i})\in E$ for all $i=1,\dots,k$.  We call $v_{0}$
the \emph{origin} of the path, $v_1,\ldots,v_{k-1}$ its \emph{intermediate nodes}, $v_{k}$ its \emph{end}, and $k$ its
\emph{length}.  We denote by $u\pathgr v$ any path with origin $u$ and
end $v$ and, whenever there exists a path $u\pathgr v$, we say that
$v$ is a \emph{descendant} of $u$.

The \emph{height} $h(v)$ of a node $v$ in a DAG $N$ is the largest
length of a path from $v$ to a leaf.  The absence of cycles implies
that the nodes of a DAG can be stratified by means of their heights:
the nodes of height 0 are the leaves, the nodes of height 1 are those
nodes all whose children are leaves, the nodes of height 2 are those
nodes all whose children are leaves and nodes of height 1, and so on.
If a node has height $m$, then all its children have height smaller
than $m$, and at least one of them has height exactly $m-1$.

Given a finite set $S$, an \emph{$S$-DAG} is 
a DAG whose leaves are bijectively labeled by elements of
$S$.  We shall always identify, usually without any further notice,
each leaf of an $S$-DAG with its label.  Two $S$-DAGs $N,N'$   are \emph{isomorphic}, in
symbols $N\cong N'$, when they are isomorphic as directed graphs and
the isomorphism preserves the leaves' labels.

 A \emph{phylogenetic network} on a set $S$ of taxa is a rooted $S$-DAG.

For every node $u$ of a phylogenetic network $N=(V,E)$, let $C(u)$ be
the set of all its descendants in $N$ and $N(u)$ the subgraph of $N$
supported on $C(u)$: it is still a phylogenetic network, with root $u$
and leaves labeled in the subset $C_L(u)\subseteq S$ of labels of the
leaves that are descendants of $u$.  We shall call $N(u)$ the
\emph{rooted subnetwork of $N$ generated by $u$}, and the set of
leaves $C_{L}(u)$ the \emph{cluster} of $u$.

 A \emph{clade} of a phylogenetic network $N$ is a rooted subnetwork
 of $N$ all whose nodes are tree nodes in $N$ (and, in particular, it
 is a rooted tree).

\subsection{Moret-Nakhleh-Warnow-\textsl{et al}'s reduction process}

Let $N=(V,E)$ be a phylogenetic network on a set $S$ of taxa.  A
subset $U$ of internal nodes of $V$ is said to be \emph{convergent}
when it has more than one element, and all nodes in it have exactly
the same cluster.

The removal of convergent sets is the basis of the \emph{reduction
procedure} introduced in \cite{moret.ea:2004}:
\begin{enumerate}
\item[(0)] Replace every clade by a new `symbolic leaf' labeled with the
names of all leaves in it.

\item[(1)] For every maximal convergent set $U$, remove all internal
descendants of its nodes (including the nodes of $U$).

\item[(2)] For every remaining node $x$ that was a parent of a removed
node $v$, add a new arc from $x$ to every (symbolic) leaf in $C_L(v)$.

(The resulting DAG contains no convergent set of nodes, because
this step does not change the clusters of the surviving nodes.)

\item[(3)] Append to every symbolic leaf representing a clade the
corresponding clade, with an arc from the symbolic leaf to the root of
the clade, and remove the label of the symbolic leaf.

\item[(4)] Replace every node  with only one parent and one child by an arc
from its parent to its only child.

(Since the DAG resulting from (2) contains no set of convergent nodes,
it contains no node with only one child.  Therefore the only possible
nodes with only one parent and one child after step (3) are those that
were symbolic leaves with only one parent.  These are the only nodes
that have to be removed in this step.)
\end{enumerate}

The output of this procedure applied to a phylogenetic
network $N$ on $S$ is a (non necessarily rooted) $S$-DAG, called
the \emph{reduced version} of $N$ and denoted by $R(N)$.  A network
$N$ is \emph{reduced} when $R(N)=N$.  It should be noticed that the
only possible convergent sets in $R(N)$ consist of a hybrid node and
its only child (more specifically, the hybrid node corresponding to a
symbolic leaf with more than one parent, and the root of the
corresponding clade) \cite{cardona.ea:07a}.

Two networks $N_{1}$ and $N_{2}$ are said to be
\emph{indistinguishable} when they have isomorphic reduced versions,
that is, when $R(N_{1})\cong R(N_{2})$.  Moret, Nakhleh, Warnow, et al
argue in \cite[p.  19]{moret.ea:2004} that for reconstructible
phylogenetic networks this notion of indistinguishability (isomorphism
after simplification) is more suitable than the existence of an
isomorphism between the original networks.

\section{Nakhleh's `metric'}

Nakhleh defines in \cite{nakhleh:2007} an \emph{equivalence} on the set of nodes
of a pair of $S$-DAGs inductively as follows.

\begin{definition}
Let $N_1=(V_1,E_1)$ and $N_2=(V_2,E_2)$ be $S$-DAGs (not necessarily
different). Two nodes $u\in V_1$ and $v\in V_2$
are \emph{equivalent}, in symbols $u\equiv v$, when:
\begin{itemize}
\item $u$ and $v$ are both leaves labeled with the same taxon, or

\item for some $k\geq 1$, node $u$ has  exactly $k$ children $u_1,\ldots,u_k$,
node $v$ has exactly $k$ children $v_1,\ldots, v_k$, and $u_i\equiv v_i$ for
every $i=1,\ldots,k$.
\end{itemize}
\end{definition}

The following characterization of node equivalence will be useful.

\begin{definition}
Let $N=(V,E)$ be a DAG. The \emph{nested labeling} $\ell(v)$ of the
nodes $v$ of $N$ is defined by induction on $h(v)$ as follows:

\begin{itemize}
\item If $h(v)=0$, that is, if $v$ is a leaf, then $\ell(v)=\{v\}$,
the one-element set consisting of its label.

\item If $h(v)=m>0$, then all its children $v_1,\ldots,v_k$ have
height smaller then $m$, and hence they have been already labeled:
then, $\ell(v)$ is the multiset of their nested labels,
$$
\ell(v)=\{\ell(v_1),\ldots,\ell(v_k)\}.
$$
\end{itemize}
\end{definition}

Notice that the nested label of a node is, in general, a nested
multiset (a multiset of multisets of multisets of\ldots), hence its
name.  Moreover, the height of a node $u$ is the highest level of
nesting of a leaf in $\ell(u)$ minus 1.

\begin{proposition}
Let $N_1=(V_1,E_1)$ and $N_2=(V_2,E_2)$ be DAGs (not necessarily
different) labeled in a set $S$.  For every $u\in V_1$ and $v\in V_2$,
$u\equiv v$ if, and only if, $\ell(u)=\ell(v)$.
\end{proposition}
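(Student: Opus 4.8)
The plan is to prove both implications by induction on the height $h(u)$, after first recording two easy observations that make the inductions go through. The first is that $u\equiv v$ forces $h(u)=h(v)$: a leaf has no children, so the second clause in the definition of $\equiv$ (which requires $k\geq 1$ children on both sides) can never relate a leaf to anything; hence if $u$ is a leaf then $u\equiv v$ can only hold via the first clause, in which case $v$ is a leaf too, and for internal nodes one combines $h(w)=1+\max_i h(w_i)$ over the children $w_i$ (which follows from the stratification remark in the excerpt) with the inductive hypothesis. The second observation is already noted in the text: the height of a node is one less than the deepest level of nesting of a leaf appearing in its nested label, so $\ell(u)=\ell(v)$ also forces $h(u)=h(v)$.

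For the implication $u\equiv v\Rightarrow \ell(u)=\ell(v)$, I would argue by induction on $m=h(u)$. If $m=0$ then, as just noted, $u\equiv v$ holds through the first clause, so $u$ and $v$ are leaves with the same label and $\ell(u)=\{u\}=\{v\}=\ell(v)$. If $m>0$ then $u$ is internal, so $u\equiv v$ holds through the second clause: $u$ has children $u_1,\dots,u_k$, $v$ has children $v_1,\dots,v_k$, and $u_i\equiv v_i$ for all $i$. Each $u_i$ has height strictly less than $m$, so the inductive hypothesis gives $\ell(u_i)=\ell(v_i)$, and therefore $\ell(u)=\{\ell(u_1),\dots,\ell(u_k)\}=\{\ell(v_1),\dots,\ell(v_k)\}=\ell(v)$ as multisets.

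For the converse $\ell(u)=\ell(v)\Rightarrow u\equiv v$, I would again induct on $m=h(u)$, using the second observation to know that $h(v)=m$ as well. If $m=0$, both $u$ and $v$ are leaves, $\ell(u)=\{u\}$ and $\ell(v)=\{v\}$ are the singletons of their labels, so the labels coincide and $u\equiv v$ by the first clause. If $m>0$, write $\ell(u)=\{\ell(u_1),\dots,\ell(u_k)\}$ and $\ell(v)=\{\ell(v_1),\dots,\ell(v_j)\}$ for the children $u_i$ of $u$ and $v_i$ of $v$; equality of these two multisets forces $j=k$ and, after reindexing, $\ell(u_i)=\ell(v_i)$ for every $i$. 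Since $h(u_i)<m$, the inductive hypothesis yields $u_i\equiv v_i$, and the second clause of the definition then gives $u\equiv v$.

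The only genuinely delicate points are bookkeeping ones: checking that a leaf can only be $\equiv$-related through the first clause (so that the base cases are clean and heights are aligned), and being explicit that an equality of multisets of nested labels may be reindexed so that the components line up, which is exactly the matching of children that the second clause of the definition of $\equiv$ requires. Neither is hard, but both should be stated rather than glossed over; everything else is a routine induction on height.
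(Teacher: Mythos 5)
Your proof is correct and follows essentially the same route as the paper: induction on the height of $u$, with leaves as the base case and multiset equality of children's nested labels matched against the child-pairing required by the definition of $\equiv$ in the inductive step. The only cosmetic difference is that you run two separate inductions (one per implication) and record the height-preservation facts explicitly, whereas the paper establishes both directions in a single induction by showing each condition is equivalent to the same statement about the children; this changes nothing of substance.
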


\begin{proof}
We prove the equivalence by induction on the height of one of the
nodes, say $u$.

If $h(u)=0$, then it is a leaf, and $\ell(u)$ is the one-element set
consisting of its label.  Thus, in this case, $u\equiv v$ if, and only
if, $v$ is the leaf of $N_2$ with the same label as $u$, and
$\ell(u)=\ell(v)$ if, and only if, $v$ is the leaf of  $N_2$ with the
same label as $u$, too.

Consider now the case when $h(u)=m>0$ and assume that the thesis holds
for all nodes $u'\in V_1$ of height smaller than $m$.  Let
$u_1,\ldots, u_k$ be the children of $u$.  Then:
\begin{itemize}
\item $u\equiv v$ if and only if $v$ has  exactly $k$ children and they can be
ordered $v_1,\ldots,v_k$ in such a way that $u_i\equiv v_i$ for every
$i=1,\ldots,k$.

\item $\ell(u)=\ell(v)$ if and only if $v$ has exactly  $k$ children and the
multiset of their nested labels is equal to the multiset of nested
labels of $u_1,\ldots,u_k$, which means that $v$'s children can be
ordered $v_1,\ldots,v_k$ in such a way that $\ell(u_i)=\ell(v_i)$ for
every $i=1,\ldots,k$.
\end{itemize}
Since, by induction, the children of $u$ satisfy the thesis, it is
clear that $u\equiv v$ is equivalent to $\ell(u)=\ell(v)$.
\end{proof}

We shall say that a nested label $\ell(v)$ is \emph{contained} in a
nested label $\ell(u)$, in symbols $\ell(v)\contained \ell(u)$, when $\ell(v)$ is the nested label of a
descendant of $u$.  Notice that if $\ell(v)$ is contained in
$\ell(u)$, then $v$ is equivalent to some
descendant of $u$, but $v$ itself need not be a descendant of $u$:
several instances of this fact can be detected in the networks depicted in
Fig.~\ref{fig:contr1}. Notice moreover that $\ell(v)\in \ell(u)$ if, and only if,
$\ell(v)$ is the nested label of a child of $u$.

Nakhleh defines in \cite{nakhleh:2007} the following dissimilarity
measure.

\begin{definition}
For every $S$-DAG $N$, let $\Upsilon(N)$ be the multiset of equivalence
classes of its nodes (where each equivalence class appears with
multiplicity the number of nodes in it).
\end{definition}

\begin{definition}
For every pair of phylogenetic networks $N_1$ and $N_2$ on the same set $S$ of taxa, let 
$$
m(N_1,N_2)=\frac{1}{2}|\Upsilon(N_1) \bigtriangleup \Upsilon(N_2)|,
$$
where $\bigtriangleup $ denotes the symmetric difference of multisets:
if a class belongs to $\Upsilon(N_1)$ with multiplicity $a$ and to
$\Upsilon(N_2)$ with multiplicity $b$, then it contributes $|a-b|$ to
$| \Upsilon(N_1)\bigtriangleup \Upsilon(N_2)|$.
\end{definition}

Notice that $\Upsilon(N)$ can be also understood as the multiset of
nested labels of the nodes of $N$, each nested label appearing with
multiplicity the number of nodes labeled with it.

\begin{lemma}
\label{prop:pre-nak}
Let $N_1$ and $N_2$ be two $S$-DAGs  such that no one of them contains any pair of equivalent nodes. Then, $m(N_1,N_2)=0$ if, and only if, $N_1\cong N_2$.
\end{lemma}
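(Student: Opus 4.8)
The plan is to prove the two implications of the biconditional separately, the implication ``$N_1\cong N_2\Rightarrow m(N_1,N_2)=0$'' being the routine one and its converse being where the hypothesis on equivalent nodes is actually used. For the routine direction I would first record the general fact that every isomorphism of $S$-DAGs $f\colon N_1\to N_2$ sends each node $u$ to a node $f(u)$ with $\ell(u)=\ell(f(u))$; this follows by a short induction on $h(u)$, since $f$ preserves leaf labels and carries the children of $u$ bijectively onto the children of $f(u)$. Consequently $f$ induces a multiplicity-preserving bijection between the equivalence classes of nodes of $N_1$ and those of $N_2$, so $\Upsilon(N_1)=\Upsilon(N_2)$ and hence $m(N_1,N_2)=0$. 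Note this part does not need the hypothesis.

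For the converse, assume $m(N_1,N_2)=0$, i.e.\ $\Upsilon(N_1)=\Upsilon(N_2)$ as multisets. Since neither network contains a pair of equivalent nodes, every equivalence class meets $V_1$ in at most one node and meets $V_2$ in at most one node; because the two multisets coincide, every class that occurs in them meets each of $V_1,V_2$ in \emph{exactly} one node. This produces a bijection $\phi\colon V_1\to V_2$ characterised by $u\equiv\phi(u)$, equivalently (by the Proposition) by $\ell(u)=\ell(\phi(u))$, for all $u\in V_1$. It then remains to check that $\phi$ is an isomorphism of $S$-DAGs, i.e.\ that it preserves leaf labels and arcs in both directions.

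Preservation of leaf labels is immediate: $u$ is a leaf labelled $s$ iff $\ell(u)=\{s\}$ iff $\ell(\phi(u))=\{s\}$ iff $\phi(u)$ is a leaf labelled $s$. For the arcs I would use the remark that $\ell(v)\in\ell(u)$ holds precisely when $\ell(v)$ is the nested label of a child of $u$. If $(u,v)\in E_1$, then $\ell(v)\in\ell(u)=\ell(\phi(u))$, so some child $w$ of $\phi(u)$ satisfies $\ell(w)=\ell(v)$; then $w\equiv v\equiv\phi(v)$, and since $N_2$ has no two equivalent nodes, $w=\phi(v)$, whence $(\phi(u),\phi(v))\in E_2$. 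Applying the very same argument to the inverse bijection $\phi^{-1}\colon V_2\to V_1$ (which exists for the same reason, as $N_1$ also has no two equivalent nodes) shows that $(\phi(u),\phi(v))\in E_2$ forces $(u,v)\in E_1$. Thus $\phi$ is an isomorphism of $S$-DAGs and $N_1\cong N_2$.

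The step I expect to demand the most care is the passage from the multiset identity $\Upsilon(N_1)=\Upsilon(N_2)$ to a genuine node bijection $\phi$, together with the subsequent identification of the ``matching child'' $w$ with $\phi(v)$: it is exactly here that the no-equivalent-nodes hypothesis is indispensable. Without it, $w$ need neither be unique among the children of $\phi(u)$ nor equal $\phi(v)$, so the arc-preservation argument collapses; indeed, the statement genuinely fails in general (cf.\ Remark~\ref{rem2}). A minor point to keep in mind is that one should not secretly use that $N_1,N_2$ are rooted: the argument above only uses the $S$-DAG structure, and $\phi$ automatically respects roots because a graph isomorphism preserves the property of having no incoming arcs.
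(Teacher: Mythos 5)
Your proposal is correct and follows essentially the same route as the paper's proof: build the bijection $V_1\to V_2$ from the no-equivalent-nodes hypothesis, check leaf labels, and establish arc preservation by noting that an image node must have a child equivalent to the image of each child, which the hypothesis forces to be that image; the reverse direction is handled symmetrically via the inverse map. Your phrasing via nested labels rather than directly via the recursive definition of equivalence, and your explicit induction for the easy direction (which the paper dismisses as obvious), are only cosmetic differences.
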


\begin{proof}
Let $R(N_1)=(V_1,E_1)$ and $R(N_2)=(V_2,E_2)$. If neither $N_1$ nor $N_2$ contain any pair 
of equivalent nodes, then  $\Upsilon(N_1)$ and $\Upsilon(N_2)$ are sets, and the quotient mappings $V_i\to \Upsilon( N_i)$ are bijections, for $i=1,2$.

Now, assume that $|\Upsilon( N_1) \bigtriangleup \Upsilon(N_2)|=0$. Then
$\Upsilon(N_1)=\Upsilon(N_2)$ and hence there exists a well-defined bijection
$\alpha:V_1\to V_2$ that sends each node in $N_1$ to the only node in $N_2$ equivalent to it.
In particular it sends each leaf of $N_1$ to the leaf of $N_2$ with the same label.
To see that $\alpha$ is an isomorphism of graphs, let $(u,v)\in E_1$ be any arc in $N_1$. Since $u\equiv \alpha(u)$, the node $\alpha(u)$ must have a child equivalent to $v$, and since $N_2$ does not contain any pair of equivalent nodes, 
this child is $\alpha(v)$, which implies that $(\alpha(u),\alpha(v))\in E_2$. This shows that $\alpha$ preserves arcs, and a similar argument applied to $\alpha^{-1}:V_2\to V_1$ shows that it also reflects them.
This proves that $\alpha:N_1 \to N_2$ is an isomorphism of $S$-DAGs.

The converse implication is obvious.
\end{proof}

A first consequence of this lemma is the following result, which is essentially Theorem 2 in Nakhleh's paper \cite{nakhleh:2007}.

\begin{proposition}
\label{prop:nak}
Let $R(N_1)$ and $R(N_2)$ be the reduced versions of two phylogenetic
networks on the same set $S$ of taxa.  Then, $m(R(N_1),R(N_2))=0$ if, and only if, $R(N_1)\cong
R(N_2)$.
\end{proposition}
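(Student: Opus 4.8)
The plan is to derive this directly from Lemma~\ref{prop:pre-nak}. That lemma says that for any two $S$-DAGs neither of which contains a pair of equivalent nodes, the value of $m$ is $0$ exactly when the two DAGs are isomorphic; so, applying it to the $S$-DAGs $R(N_1)$ and $R(N_2)$, the whole statement will follow \emph{provided} I can show that a reduced network never contains two distinct equivalent nodes. The converse implication, $R(N_1)\cong R(N_2)\Rightarrow m(R(N_1),R(N_2))=0$, is trivial and is already covered by the lemma.

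So the one thing to prove is: if $N$ is a phylogenetic network, then $R(N)$ has no pair of distinct equivalent nodes. Suppose $u\neq v$ in $R(N)$ with $u\equiv v$, i.e.\ (by the Proposition characterising $\equiv$) $\ell(u)=\ell(v)$. First, $u$ and $v$ must both be internal: if one of them were a leaf, the other would be a leaf with the same label, and since the leaves of an $S$-DAG are bijectively labelled this would force $u=v$. Next, since the cluster of a node is exactly the set of leaf-labels occurring, at any depth of nesting, in its nested label, $\ell(u)=\ell(v)$ gives $C_L(u)=C_L(v)$; hence $\{u,v\}$ is a convergent set of $R(N)$. Now I invoke the structural description of reduced networks recalled after the reduction procedure (from \cite{cardona.ea:07a}): every convergent set of $R(N)$ consists of a hybrid node $h$ together with its unique child $c$. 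As a convergent pair, $\{u,v\}$ must therefore be such a pair $\{h,c\}$. But since $h$ has $c$ as its \emph{only} child, $h(h)=h(c)+1$, whereas equivalent nodes have equal height (the height of a node being one less than the deepest nesting level reached by a leaf in its nested label); so $h\not\equiv c$, a contradiction. This proves the claim.

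Having established that neither $R(N_1)$ nor $R(N_2)$ contains a pair of equivalent nodes, Lemma~\ref{prop:pre-nak} applies verbatim to them and gives $m(R(N_1),R(N_2))=0$ if and only if $R(N_1)\cong R(N_2)$. The only delicate point in the argument is the step identifying the convergent sets of $R(N)$, which rests entirely on the cited classification; a minor thing to watch is that a convergent pair could \emph{a priori} be a proper subset of a larger maximal convergent set, but since the maximal ones have exactly two elements, equality is forced and no further case distinction is needed.
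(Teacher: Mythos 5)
Your proposal is correct and has the same skeleton as the paper's proof: reduce everything to Lemma~\ref{prop:pre-nak} and verify that a reduced network contains no pair of distinct equivalent nodes. The difference is that the paper disposes of that verification in one line by citing Nakhleh's Observation~2, whereas you actually prove it: equivalent distinct nodes would be internal (leaves are bijectively labelled), would have equal clusters (the cluster is the set of leaf labels occurring in the nested label) and hence would form a convergent set, which by the classification of convergent sets in $R(N)$ must be a hybrid node together with its unique child --- impossible, since those two nodes have different heights while equivalent nodes have equal height. Every step checks out, including your remark that no maximality case distinction is needed. What your version buys is self-containedness relative to Nakhleh's paper; note, though, that it still rests on an external citation, namely the description of the convergent sets of $R(N)$ taken from \cite{cardona.ea:07a}, which this paper also states without proof, so the argument is not fully from first principles either way. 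One cosmetic point: you overload the symbol $h$ as both the hybrid node and the height function (writing $h(h)=h(c)+1$); rename one of them.
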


\begin{proof}
The reduced version of a phylogenetic network does not contain any pair of equivalent nodes
  \cite[Obs. 2]{nakhleh:2007}. 
 \end{proof}

\begin{corollary}
Let $N_1$ and $N_2$ be two reduced phylogenetic networks on the same set $S$ of taxa. Then, $m(N_1,N_2)=0$ if, and only if, $N_1\cong N_2$.
\end{corollary}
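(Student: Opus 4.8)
The plan is to observe that this is an immediate specialization of Proposition~\ref{prop:nak}. By definition, a phylogenetic network $N$ is reduced exactly when $R(N)=N$; hence, if $N_1$ and $N_2$ are both reduced, then $R(N_1)=N_1$ and $R(N_2)=N_2$. Substituting these identities into the statement of Proposition~\ref{prop:nak} yields directly that $m(N_1,N_2)=0$ if, and only if, $N_1\cong N_2$.

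Alternatively, and just as quickly, one can invoke Lemma~\ref{prop:pre-nak} directly: a reduced phylogenetic network contains no pair of equivalent nodes (this is Nakhleh's Obs.~2 in \cite{nakhleh:2007}, already used in the proof of Proposition~\ref{prop:nak}), so $N_1$ and $N_2$ satisfy the hypothesis of Lemma~\ref{prop:pre-nak}, and the conclusion follows. Either route is a one-line deduction; there is no real obstacle here, since all the work has been done in Lemma~\ref{prop:pre-nak} and Proposition~\ref{prop:nak}. The only point requiring care is simply recalling the definition of "reduced" so that the reduction operator $R$ can be dropped from the arguments of $m$.

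\begin{proof}
Since $N_1$ and $N_2$ are reduced, $R(N_1)=N_1$ and $R(N_2)=N_2$. The assertion is then the special case of Proposition~\ref{prop:nak} obtained by replacing $R(N_i)$ with $N_i$.
\end{proof}
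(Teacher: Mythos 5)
Your proof is correct and matches the paper's intent exactly: the corollary is stated there without proof precisely because it is the immediate specialization of Proposition~\ref{prop:nak} obtained by using $R(N_i)=N_i$ for reduced networks. Your alternative route via Lemma~\ref{prop:pre-nak} is equally valid and amounts to the same underlying fact that reduced networks contain no pair of equivalent nodes.
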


Another type of phylogenetic networks not containing any pair of equivalent nodes are the \emph{tree-child} phylogenetic networks: phylogenetic networks where every internal node has a child that is a tree node. Tree-child phylogenetic networks were introduced in \cite{cardona.ea:07b}, and a metric and an alignment method for them was proposed, and they have been recently proposed by S. J. Willson as the class where meaningful phylogenetic networks should be
searched~\cite{willson:07}.

\begin{proposition}
A tree-child phylogenetic network does not contain any pair of equivalent nodes. 
\end{proposition}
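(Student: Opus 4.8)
The plan is to argue by (strong) induction on height, exploiting two facts already at hand: the characterization of equivalence by nested labels in the Proposition above, together with the remark that the height of a node is the maximal nesting depth of a leaf in its nested label minus one — so that $\ell(u)=\ell(v)$ forces $h(u)=h(v)$. Concretely, I would suppose for contradiction that a tree-child phylogenetic network $N=(V,E)$ contains a pair of distinct equivalent nodes, and among all such pairs pick one, say $u\neq v$, whose common height $h(u)=h(v)$ is as small as possible.

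The base case is immediate: if $h(u)=0$ then $u$ and $v$ are both leaves, their nested labels are the singletons $\{u\}$ and $\{v\}$ of their labels, and $\ell(u)=\ell(v)$ forces $u$ and $v$ to carry the same taxon; since the leaves of an $S$-DAG are bijectively labelled, $u=v$, a contradiction. For the inductive step, assume $h(u)=h(v)=m>0$; then $u$ and $v$ are internal and hence have children. Writing $u_1,\dots,u_k$ for the children of $u$ and using that $\ell(u)=\ell(v)$ are equal (nonempty) multisets, $v$ has exactly $k$ children as well, and after reindexing we may assume $\ell(u_i)=\ell(v_i)$, that is $u_i\equiv v_i$, for $i=1,\dots,k$. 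Each of $u_i$ and $v_i$ has height strictly smaller than $m$, so by the minimality in the choice of $(u,v)$ the pair $(u_i,v_i)$ cannot be a pair of \emph{distinct} equivalent nodes; hence $u_i=v_i$ for every $i$, and therefore $u$ and $v$ have exactly the same children.

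Now the tree-child hypothesis supplies the contradiction. Since $u$ is internal, it has a child $w$ that is a tree node; by the previous step $w$ is also a child of $v$, so $w$ has the two distinct parents $u$ and $v$, contradicting the fact that a tree node has at most one parent. This closes the induction, so $N$ has no pair of distinct equivalent nodes.

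I do not expect a serious obstacle here: the tree-child property is used only in the final one-line argument about a tree node having at most one parent, and everything else is a matching-of-children bookkeeping. The one point to handle carefully is the interplay with heights — that equivalent nodes genuinely have equal height, and that this height strictly drops when passing from $(u,v)$ to a matched pair of children $(u_i,v_i)$ — but this is exactly what the nested-labeling description of equivalence guarantees, so the induction is well founded.
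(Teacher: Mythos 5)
Your proof is correct, and it is genuinely different from the one in the paper. The paper's argument is global: from $u\equiv v$ it extracts $C_L(u)=C_L(v)$ and $h(u)=h(v)$, invokes a lemma guaranteeing in a tree-child network a leaf $s$ reachable from $u$ along a path all of whose intermediate nodes (and $s$ itself) are tree nodes, deduces that $v$ also reaches $s$, and then uses a second cited lemma to conclude that $u$ and $v$ lie on a common directed path, whence equal heights force $u=v$. Your argument is instead local and self-contained: a minimal-counterexample induction on height shows, via the nested-label characterization, that a lowest pair of distinct equivalent nodes would have to share \emph{all} their children, and then the tree-child condition is violated because the guaranteed tree-node child would acquire two parents. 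The one point worth being explicit about is the well-foundedness you already flag: equivalent nodes have equal height (height is recoverable from the nested label), and children have strictly smaller height, so the matched pairs $(u_i,v_i)$ really do fall under the minimality hypothesis and must coincide. What your route buys is independence from the two external lemmas of \cite{cardona.ea:07a} (and from the intermediate claim $C_L(u)=C_L(v)$), at the cost of a slightly longer bookkeeping step; the paper's route is shorter on the page but leans on previously established structure theory of tree-child networks.
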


\begin{proof}
Let $u$ and $v$ be two nodes of  a tree-child phylogenetic network $N$.
If $u\equiv v$, then $C_L(u)=C_L(v)$ and $h(u)=h(v)$. Let now $s$ be a leaf for which there exists a path $u\pathgr s$ with all intermediate nodes and $s$ itself tree nodes (which exists by \cite[Lem.~2]{cardona.ea:07a}).
Then $s\in C_L(u)=C_L(v)$, which implies that there is also a path $v\pathgr s$.
By \cite[Lem.~1]{cardona.ea:07a}, this implies that $u$ and $v$ are connected by a path. If they have moreover the same height, they must be the same node.
\end{proof}

\begin{corollary}
Let $N_1$ and $N_2$ be two tree-child phylogenetic networks on the same set $S$ of taxa. Then, $m(N_1,N_2)=0$ if, and only if, $N_1\cong N_2$.
\end{corollary}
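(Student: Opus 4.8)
The plan is to deduce the corollary directly from the immediately preceding proposition together with Lemma~\ref{prop:pre-nak}, so there is really only one step. Since $N_1$ and $N_2$ are tree-child phylogenetic networks on the same set $S$, the proposition just proved — applied once to $N_1$ and once to $N_2$ — guarantees that neither of them contains a pair of equivalent nodes. This is precisely the hypothesis required by Lemma~\ref{prop:pre-nak}, whose conclusion is exactly that $m(N_1,N_2)=0$ if and only if $N_1\cong N_2$. Hence the result follows by simply invoking that lemma.

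Consequently, nothing substantial remains to be done: the corollary is an immediate specialization of Lemma~\ref{prop:pre-nak} to the tree-child class, and the only points worth noting are that the no-equivalent-nodes condition is verified for each network separately by the previous proposition (which is a statement about a single arbitrary tree-child network), and that both networks are assumed to be labeled by the same $S$, as needed for $m(N_1,N_2)$ and for the notion of isomorphism of $S$-DAGs to be meaningful. There is no genuine obstacle here; if one wished to be fully self-contained one could reinsert the short argument from the proof of Lemma~\ref{prop:pre-nak} (absence of equivalent nodes makes the quotient maps $V_i\to\Upsilon(N_i)$ bijective, so $m(N_1,N_2)=0$ yields a leaf-label-preserving bijection matching equivalent nodes, which one checks preserves and reflects arcs), but since that lemma is already available this repetition is unnecessary. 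The converse implication, as always, is trivial.
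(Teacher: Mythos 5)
Your proposal is correct and is exactly the argument the paper intends: the preceding proposition shows each tree-child network contains no pair of equivalent nodes, and Lemma~\ref{prop:pre-nak} then gives the equivalence directly. Nothing further is needed.
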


\begin{remark}
\label{rem2}
It is false in general that if two arbitrary phylogenetic networks
$N_1$ and $N_2$ on the same set $S$ of taxa are such that $m(N_1,N_2)=0$, then $N_1\cong N_2$.
For instance, it is easy to check that the networks depicted in
Fig.~\ref{fig:contr1} have the same multisets $\Upsilon$, but they are
not isomorphic.
\end{remark}

\begin{figure}[htb]
\begin{center}
\begin{tikzpicture}[scale=0.9,thick,>=stealth]
\draw(0.5,0) node[tre] (1) {}; \etq 1
\draw(1,0) node[tre] (2) {}; \etq 2
\draw(2.5,0) node[tre] (3) {}; \etq 3
\draw(3,0) node[tre] (4) {}; \etq 4
\draw(0.5,1) node[hyb] (A) {}; 
\draw(1,1) node[hyb] (B) {}; 
\draw(2.5,1) node[hyb] (C) {}; 
\draw(3,1) node[hyb] (D) {}; 
\draw(0,2) node[tre] (a) {}; 
\draw(1.25,2) node[tre] (b) {}; 
\draw(2.25,2) node[tre] (c) {}; 
\draw(3.5,2) node[tre] (d) {}; 
\draw(0,3) node[hyb] (E) {}; 
\draw(1.25,3) node[hyb] (F) {}; 
\draw(2.25,3) node[hyb] (G) {}; 
\draw(3.5,3) node[hyb] (H) {}; 
\draw(0,4.5) node[tre] (e) {}; 
\draw(1.25,4.5) node[tre] (f) {}; 
\draw(2.25,4.5) node[tre] (g) {}; 
\draw(3.5,4.5) node[tre] (h) {}; 
\draw(0.65,5.5) node[tre] (i) {}; 
\draw(2.87,5.5) node[tre] (j) {}; 
\draw(1.75,6.5) node[tre] (r) {}; 
 \draw[->](r)--(i);
 \draw[->](r)--(j);
 \draw[->](i)--(e);
    \draw[->](i)--(f);
    \draw[->](j)--(g);
    \draw[->](j)--(h);
    \draw[->](e)--(E);
    \draw[->](f)--(F);
    \draw[->](g)--(G);
    \draw[->](h)--(H);
    \draw[->](e)--(G);
    \draw[->](f)--(H);
    \draw[->](g)--(E);
    \draw[->](h)--(F);
    \draw[->](E)--(a);
    \draw[->](F)--(b);
    \draw[->](G)--(c);
    \draw[->](H)--(d);
    \draw[->](a)--(A);
    \draw[->](a)--(B);
    \draw[->](b)--(A);
    \draw[->](b)--(B);
    \draw[->](c)--(C);
    \draw[->](c)--(D);
    \draw[->](d)--(C);
    \draw[->](d)--(D);
    \draw[->](A)--(1);
    \draw[->](B)--(2);
    \draw[->](C)--(3);
    \draw[->](D)--(4);
  \end{tikzpicture}
  \qquad\qquad\qquad
  \begin{tikzpicture}[scale=0.9,thick,>=stealth]
\draw(0.5,0) node[tre] (1) {}; \etq 1
\draw(1,0) node[tre] (2) {}; \etq 2
\draw(2.5,0) node[tre] (3) {}; \etq 3
\draw(3,0) node[tre] (4) {}; \etq 4
\draw(0.5,1) node[hyb] (A) {}; 
\draw(1,1) node[hyb] (B) {}; 
\draw(2.5,1) node[hyb] (C) {}; 
\draw(3,1) node[hyb] (D) {}; 
\draw(0,2) node[tre] (a) {}; 
\draw(1.25,2) node[tre] (b) {}; 
\draw(2.25,2) node[tre] (c) {}; 
\draw(3.5,2) node[tre] (d) {}; 
\draw(0,3) node[hyb] (E) {}; 
\draw(1.25,3) node[hyb] (F) {}; 
\draw(2.25,3) node[hyb] (G) {}; 
\draw(3.5,3) node[hyb] (H) {}; 
\draw(0,4.5) node[tre] (e) {}; 
\draw(1.25,4.5) node[tre] (f) {}; 
\draw(2.25,4.5) node[tre] (g) {}; 
\draw(3.5,4.5) node[tre] (h) {}; 
\draw(0.65,5.5) node[tre] (i) {}; 
\draw(2.87,5.5) node[tre] (j) {}; 
\draw(1.75,6.5) node[tre] (r) {}; 
 \draw[->](r)--(i);
 \draw[->](r)--(j);
 \draw[->](i)--(e);
    \draw[->](i)--(f);
    \draw[->](j)--(g);
    \draw[->](j)--(h);
    \draw[->](e)--(E);
    \draw[->](f)--(E);
    \draw[->](g)--(F);
    \draw[->](h)--(F);
    \draw[->](e)--(G);
    \draw[->](f)--(G);
    \draw[->](g)--(H);
    \draw[->](h)--(H);
    \draw[->](E)--(a);
    \draw[->](F)--(b);
    \draw[->](G)--(c);
    \draw[->](H)--(d);
    \draw[->](a)--(A);
    \draw[->](a)--(B);
    \draw[->](b)--(A);
    \draw[->](b)--(B);
    \draw[->](c)--(C);
    \draw[->](c)--(D);
    \draw[->](d)--(C);
    \draw[->](d)--(D);
    \draw[->](A)--(1);
    \draw[->](B)--(2);
    \draw[->](C)--(3);
    \draw[->](D)--(4);
  \end{tikzpicture}
\end{center}
\caption{\label{fig:contr1} These phylogenetic networks have the same
multisets of equivalence classes of nodes, but they are not
isomorphic}
\end{figure}

Now, it turns out that this metric $m$  separates
networks that are distinguishable up to reduction.  We would like to
recall here that this was the (unaccomplished
\cite{cardona.ea:07a}) goal of the error metric defined in
\cite{moret.ea:2004}.

\begin{theorem}
Let $N_1$ and $N_2$ be two phylogenetic networks on the set $S$ of
taxa. If
$m(N_1,N_2)=0$, then $N_1$ and $N_2$ are indistinguishable.
\end{theorem}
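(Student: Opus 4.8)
The plan is to reduce the statement to Proposition~\ref{prop:nak}. The hypothesis $m(N_1,N_2)=0$ says exactly that $\Upsilon(N_1)=\Upsilon(N_2)$, and by Proposition~\ref{prop:nak} the networks $N_1,N_2$ are indistinguishable as soon as $m(R(N_1),R(N_2))=0$, that is, as soon as $\Upsilon(R(N_1))=\Upsilon(R(N_2))$. So it is enough to prove that
$$
\Upsilon(N_1)=\Upsilon(N_2)\ \Longrightarrow\ \Upsilon(R(N_1))=\Upsilon(R(N_2)),
$$
and this would follow from the stronger claim that $\Upsilon(R(N))$ depends on $N$ only through $\Upsilon(N)$, i.e.\ that there is a purely combinatorial transformation $\mathcal{R}$ of multisets of nested labels with $\mathcal{R}(\Upsilon(N))=\Upsilon(R(N))$ for every phylogenetic network $N$. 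Constructing such an $\mathcal{R}$ is the real task: one must re-express each of the five steps of the Moret--Nakhleh--Warnow reduction as an operation on nested labels and check that it uses only information carried by the current multiset.

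The groundwork is to isolate what a multiset $\Upsilon$ of nested labels already reveals. The cluster $C_L(\ell)$ of a label is the set of taxa occurring inside $\ell$ at any depth; the relation $\ell'\contained\ell$ holds exactly when $\ell'$ equals $\ell$ or occurs somewhere inside it, so it is decidable label by label; and the subnetwork below a node is a tree exactly when every taxon occurs precisely once in that node's nested label. From these one recovers the notions the reduction needs: which clusters are convergent (at least two internal nodes, outside all clades, carry that cluster---a count taken over $\Upsilon$ once clade roots have been recognized), which labels belong to clade roots, and which labels are \emph{doomed}, i.e.\ carried only by nodes that step~(1) deletes. For the last, the key point is that $\ell$ is doomed precisely when $\ell\contained\ell_0$ for some $\ell_0\in\Upsilon$ whose cluster is convergent: this combines the observation recorded above that $\ell\contained\ell_0$ forces the bearer of $\ell$ to be equivalent to a descendant of a bearer of $\ell_0$ with the fact that any label of $\Upsilon$-multiplicity at least $2$ is itself the label of a convergent set, hence doomed. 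In particular doomedness is a property of labels, not merely of nodes.

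Granting this, $\mathcal{R}$ is assembled step by step: steps~(0) and~(3) are handled together by recognizing the clade-root labels, setting aside as an indivisible ``super-leaf'' the tree each encodes, and reinstating it unchanged at the end, so clades are never deleted or flattened in between; step~(1) deletes the doomed labels; step~(2) replaces, in every surviving label, each maximal doomed sub-label by the list of taxa in its cluster and records the resulting multiset of survivor labels; and step~(4) strips a one-element nesting level sitting immediately above a reinstated clade root whose parent has no other child. The part I expect to be the main obstacle is proving that each of these operations is genuinely a function of $\Upsilon$ alone. The difficulty is that $N$ itself is \emph{not} reconstructible from $\Upsilon(N)$---this is exactly Remark~\ref{rem2}---so one cannot legitimately rebuild $N$, reduce it, and read off the answer; one must instead show that the only structure the reduction actually depends on, namely the partition of the internal nodes into convergent classes and which nodes descend from convergent ones, is already pinned down by $\Upsilon(N)$, while the ambiguity that $\Upsilon$ leaves in $N$ is confined to the doomed region the reduction discards. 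Concretely the two delicate ingredients are an intrinsic $\Upsilon$-characterization of clade roots---a node $r$ qualifies exactly when every taxon occurs once in $\ell(r)$ and, in the tree that label describes, each non-root sub-label occurs in $\Upsilon$ with the multiplicity forced by its occurrences, so that no node of the would-be clade is secretly a hybrid or is shared from outside---and a lemma that if a node $v$ survives the reduction then its new nested label, obtained by flattening its doomed descendants, is determined by $\ell_N(v)$ together with the family of convergent clusters, so that the multiset of survivor labels is determined by $\Upsilon(N)$. Once these are in place $\mathcal{R}$ is well defined, and the theorem follows from the displayed implication together with Proposition~\ref{prop:nak}.
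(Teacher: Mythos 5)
Your proposal is correct and follows essentially the same route as the paper: both arguments show that each step of the Moret--Nakhleh--Warnow reduction acts on the multiset of nested labels in a way determined by that multiset alone (convergent clusters read off from the taxa occurring in a label, clade roots characterized by multiplicity-one and unique-containment conditions, deleted labels being those contained in a label with convergent cluster), and then invoke Proposition~\ref{prop:nak} on the reduced networks. The two ``delicate ingredients'' you flag are precisely the points the paper's proof fills in, with the same justifications you sketch.
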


\begin{proof}
In this proof, we shall take $\Upsilon(N)$ as the multiset of nested
labels of a network $N$.  Let $N_1=(V_1,E_1)$ and $N_2=(V_2,E_2)$ be two phylogenetic
networks such that $\Upsilon(N_1)=\Upsilon(N_2)$.  We shall prove that
the reduction process of both networks modifies exactly in the same
way their multisets of nested labels, and thus the reduced
versions $R(N_1)$ and $R(N_2)$ also have the same multisets of nested
labels.  Then, by Proposition \ref{prop:nak}, the latter are
isomorphic.

To begin with, notice that two nodes are convergent when the set of
$S$-labels appearing in their nested labels are the same (without
taking into account nesting levels or multiplicities).  In particular,
$N_1$ and $N_2$ have the same sets of nested labels of convergent
nodes.

Step (0) in the reduction process consists of replacing every clade by
a symbolic leaf.  This corresponds to remove the
nested labels of the nodes belonging to clades (except their roots)
and to replace, in all remaining nested labels, each nested label of a
root of a clade by the label of the corresponding symbolic leaf. We 
must prove now that we can decide from the multisets of nested labels alone which 
are the nested labels of nodes of clades and of roots of clades. 

Since  the clades of a phylogenetic network are subtrees, a node
belonging to a clade is only equivalent to itself (if $v$ is a node of a clade and $v\equiv u$, then $C_L(u)=C_L(v)$, but in this case, since $v$ is the least common ancestor of $C_L(v)$ in the clade it belongs,
$v$ must be a descendant of $u$, and since $u$ and $v$ have the same height ---because they are equivalent--- they must be the same node). In particular, a node of a clade does not share its nested label with any other node.

Then, 
the nested labels of nodes $v\in V_i$  belonging to some clade of $N_i$ ($i=1,2$) are characterized by
the following two properties: 
$\ell(v)$ and each one of  the nested labels contained in it appear with multiplicity 1 in $\Upsilon(N_1)=\Upsilon(N_2)$ (and in particular $v$ and its descendants are characterized by their nested labels);
and  $\ell(v)$ and each one of  the nested labels contained in it belong at most to one nested label
(this means that $v$ and its descendants are tree nodes, and in particular that the rooted subnetwork generated by $v$ is a tree consisting only of tree nodes from $N_i$).
And therefore the roots of clades of $N_i$ are the nodes $v$ with nested label $\ell(v)$ maximal with these properties, and the nodes of
the clade rooted at $v$ are those nodes with nested labels contained in
$\ell(v)$.  This shows that the nested
labels of roots of clades and the nested labels of nodes belonging to
clades in $N_{1}$ are the same as in $N_{2}$.

So, we remove the same nested labels in $N_{1}$ and $N_{2}$ and we 
replace the same nested labels by symbolic leaves. As a consequence, 
the networks resulting after this step have the same nested labels.

In step (1), all nodes that are convergent with some other node are
removed, and all nodes other than symbolic leaves that are
descendant of some removed node are also removed.  So, in this step we
remove the nested labels of convergent nodes, and the nested labels
other than singletons that are contained in some nested label of
convergent node (notice that if $\ell(v)$ is not a singleton and it is
contained in $\ell(u)$ and $u$ is convergent, then either $v$ is a
descendant of $u$, and then it has to be removed, or it is equivalent
to a descendant of $u$, and then it forms a convergent set with this
descendant and it has to be removed, too).  This shows that the nested
labels of the nodes removed in both networks are the same, and hence that  the
nested labels of the nodes that remain in both networks are also the
same.

In step (2), the paths from the remaining nodes to the labels are
restored.  It means to replace in each
remaining nested label $\ell(x)$, each maximal  
nested label $\ell(v)\contained \ell(x)$ of a removed node $v$  by the
singletons $\{s_1\},\{s_2\},\ldots,\{s_p\}$ of the symbolic leaves
appearing in $\ell(v)$.  Again, this operation only depends on the
nested labels, and therefore after this step the resulting DAGs have
the same multisets of nested labels.

In step (3), clades are restored. This is simply done by replacing in
the nested labels each symbolic leaf $s$ by the nested label of the
root of the clade it replaced, between brackets (because we append it
to the node corresponding to the symbolic leaf).  Since the same
clades were removed in both networks and replaced by the same symbolic leaves, after this step the resulting
DAGs still have the same multisets of nested labels.

Finally, in step (4), the nodes with only one parent and only one
child are removed.  This corresponds to remove nested labels of the
form $\{\{\ldots\}\}$  that are children of only one parent (that
is, that belong to only one nested
label), and hence the same nested labels are removed in both DAGs.

So, at the end of this procedure, the resulting DAGs $R(N_1)$ and
$R(N_2)$ have the same multisets of nested labels.  By Proposition
\ref{prop:nak}, this implies that $R(N_1)$ and $R(N_2)$ are
isomorphic.
\end{proof}

The converse implication is, of course false: since the reduction
process may remove parts with different topologies that yield
differences in the multisets of equivalence classes, two phylogenetic
networks with isomorphic reduced versions may have different multisets
of equivalence classes.

The value $m(N_1,N_2)$ can be computed in time polynomial in
the sizes of the networks $N_1,N_2$  by performing a simultaneous bottom-up traversal of the two
networks \cite{valiente:2002,valiente:2007}

\section{A metric for arbitrary phylogenetic networks}

If instead of the equivalence classes of the
nodes (or, equivalently, their nested labels) we consider the whole
rooted subnetworks generated by the nodes, we can define a true
distance on the whole class of all phylogenetic networks.

\begin{remark}
It is clear that if $u$ and $v$ are two nodes of two phylogenetic
networks $N_{1}$ and $N_{2}$, respectively  (it can happen that
$N_{1}=N_{2}$), such that the rooted subnetworks $N_{1}(u)$ and
$N_{2}(v)$ generated by them are isomorphic, then $u\equiv v$
(because the equivalence can be computed within these rooted
subnetworks). But the converse implication is false:
node equivalence in phylogenetic networks does not imply 
isomorphism of the rooted subnetworks.  Consider for
instance the non-isomorphic phylogenetic networks depicted in
Fig.~\ref{fig:contr1}: it is easy to check that their roots are
equivalent.
\end{remark}

\begin{definition}
For every $S$-DAG $N$, let $\Sigma(N)$ be the multiset of isomorphism
classes of the rooted subnetworks generated by its nodes.
\end{definition}

\begin{definition}
For every pair of phylogenetic networks $N_1$ and $N_2$  on the same set $S$ of
taxa, let 
$$
\sigma(N_1,N_2)=\frac{1}{2}|\Sigma(N_1) \bigtriangleup \Sigma(N_2)|,
$$
where $\bigtriangleup $ denotes the symmetric difference of multisets.
\end{definition}

\begin{theorem}
Let $N_1$ and $N_2$ be two phylogenetic networks on the same set $S$ of
taxa.  Then, $\sigma(N_1,N_2)=0$ if, and only if, $N_1\cong N_2$.
\end{theorem}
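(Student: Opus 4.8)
The plan is to prove the two implications separately. The ``only if'' direction is routine, and the ``if'' direction rests on a single observation: the rooted subnetwork generated by the root of a network is the whole network, so the isomorphism class of $N_i$ itself occurs in $\Sigma(N_i)$; once this is combined with a cardinality count, the result falls out. Note that, as the remark preceding the definition of $\Sigma$ warns, one cannot simply transcribe the argument used for $m$, because node equivalence is strictly weaker than isomorphism of generated subnetworks; the gain here is that working with full generated subnetworks keeps enough information for the global network to resurface.

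First I would dispatch the ``only if'' direction. Suppose $N_1\cong N_2$ and fix a label-preserving isomorphism $\phi\colon N_1\to N_2$. For every node $u$ of $N_1$, $\phi$ sends the set $C(u)$ of descendants of $u$ bijectively onto the set of descendants of $\phi(u)$, preserving arcs and leaf labels among these nodes; hence $\phi$ restricts to an isomorphism $N_1(u)\cong N_2(\phi(u))$. Thus the bijection $u\mapsto\phi(u)$ identifies, isomorphism class by isomorphism class, the rooted subnetworks generated by the nodes of $N_1$ with those generated by the nodes of $N_2$. Therefore $\Sigma(N_1)=\Sigma(N_2)$ and $\sigma(N_1,N_2)=0$.

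For the converse, assume $\sigma(N_1,N_2)=0$, that is, $\Sigma(N_1)=\Sigma(N_2)$ as multisets, and write $N_i=(V_i,E_i)$ with root $r_i$. Counting members with multiplicity, $\Sigma(N_i)$ has exactly $|V_i|$ elements (one per node), so $|V_1|=|V_2|$. Since $r_1$ reaches every node of $N_1$, we have $C(r_1)=V_1$ and hence $N_1(r_1)=N_1$; therefore the isomorphism class of $N_1$ belongs to $\Sigma(N_1)=\Sigma(N_2)$, so there exists a node $v\in V_2$ with $N_2(v)\cong N_1$.

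Finally I would pin down $v$. The node set of $N_2(v)$ is $C(v)\subseteq V_2$, so $|C(v)|=|V(N_2(v))|=|V_1|=|V_2|$, forcing $C(v)=V_2$. In particular $r_2\in C(v)$, i.e.\ there is a path $v\pathgr r_2$ in $N_2$; since $r_2$ has no parent, this path is trivial and $v=r_2$. Hence $N_1\cong N_2(v)=N_2(r_2)=N_2$, as desired. There is essentially no hard step in this argument; if one prefers to avoid the total-multiplicity remark, a symmetric variant works just as well: from $[N_1]\in\Sigma(N_2)$ and $[N_2]\in\Sigma(N_1)$ one gets $|V_1|\le|V_2|$ and $|V_2|\le|V_1|$, whence equality, and then the same conclusion that the witnessing nodes must be the respective roots.
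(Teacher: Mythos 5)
Your proof is correct and follows essentially the same route as the paper's: both hinge on the observation that $N_i=N_i(r_i)$ contributes the class of the whole network to $\Sigma(N_i)$, and then a finiteness/cardinality argument forces the witnessing subnetwork to be all of the other network. The paper phrases the last step as ``a finite graph cannot contain a proper subgraph isomorphic to itself,'' while you make the same point via an explicit node count; this is only a difference in presentation.
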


\begin{proof}
Assume that $\sigma(N_1,N_2)=0$, that is, $\Sigma(N_1)=\Sigma(N_2)$.
Since each $N_i$ is its rooted subnetwork  generated by its
root, we conclude that $N_1$ contains a rooted subnetwork isomorphic
to $N_2$ and $N_2$ contains a rooted subnetwork isomorphic to $N_1$.
The only possibility is then that $N_1$ and $N_2$ are isomorphic
(otherwise, $N_1$ would contain a rooted subnetwork isomorphic to it
and strictly contained in it, something that in finite graphs is
impossible).

The converse implication is obvious.
\end{proof}

\begin{corollary}
The mapping $\sigma$ is a metric on the class of all phylogenetic
networks on the set $S$ of taxa, that is, it satisfies the following
properties: for every phylogenetic networks $N_1,N_2,N_3$ on the set 
$S$,
\begin{enumerate}[(a)]
\item \emph{Non-negativity}: $\sigma(N_1,N_2)\geq 0$
\item \emph{Separation}: $\sigma(N_1,N_2)=0$ if and only if $N_1\cong N_2'$
\item \emph{Symmetry}: $\sigma(N_1,N_2)=\sigma(N_2,N_1)$
\item \emph{Triangle inequality}: $\sigma(N_1,N_3)\leq \sigma(N_1,N_2)+\sigma(N_2,N_3)$
\end{enumerate}
\end{corollary}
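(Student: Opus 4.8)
The plan is simply to verify the four metric axioms, noting that three of them are immediate and that only the triangle inequality requires an argument, which in turn reduces to a well-known property of symmetric differences of finite multisets.

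Non-negativity holds because $\sigma(N_1,N_2)$ is, by definition, half the cardinality of a multiset; symmetry holds because $A\bigtriangleup B=B\bigtriangleup A$ for multisets exactly as for sets, so $\Sigma(N_1)\bigtriangleup\Sigma(N_2)=\Sigma(N_2)\bigtriangleup\Sigma(N_1)$ and hence $\sigma(N_1,N_2)=\sigma(N_2,N_1)$; and the separation axiom is nothing but the theorem just proved, since $\sigma(N_1,N_2)=0$ means $\Sigma(N_1)=\Sigma(N_2)$, which we have shown to be equivalent to $N_1\cong N_2$. So for these three points no new work is needed beyond invoking the theorem.

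For the triangle inequality, fix phylogenetic networks $N_1,N_2,N_3$ on $S$. The elements of the multisets $\Sigma(N_i)$ are isomorphism classes of phylogenetic networks on subsets of $S$; for such a class $c$ write $n_i(c)\in\NN$ for its multiplicity in $\Sigma(N_i)$, and note that $n_i(c)=0$ for all but finitely many $c$, since $N_i$ has finitely many nodes. By the definition of symmetric difference of multisets, $|\Sigma(N_i)\bigtriangleup\Sigma(N_j)|=\sum_c|n_i(c)-n_j(c)|$, a finite sum. For each single class $c$, the triangle inequality for the absolute value on $\mathbb{Z}$ gives $|n_1(c)-n_3(c)|\le|n_1(c)-n_2(c)|+|n_2(c)-n_3(c)|$; adding these inequalities over the finitely many relevant $c$ and dividing by $2$ yields $\sigma(N_1,N_3)\le\sigma(N_1,N_2)+\sigma(N_2,N_3)$.

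I do not anticipate any genuine obstacle: the content of the corollary is really just that $\frac{1}{2}|\cdot\,\bigtriangleup\,\cdot|$ is the standard metric attached to symmetric differences of finite multisets, here restricted to the multisets of the form $\Sigma(N)$, combined with the separation statement already established. The only mild care needed is in the last paragraph, to keep the multiset bookkeeping clean and to record that the sums involved are finite, so that the termwise triangle inequality may legitimately be summed.
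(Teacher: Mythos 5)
Your proposal is correct and follows essentially the same route as the paper: the first three axioms are immediate or follow from the preceding theorem, and the triangle inequality is reduced to the triangle inequality for symmetric differences of multisets, which you simply spell out explicitly via the termwise bound $|n_1(c)-n_3(c)|\le|n_1(c)-n_2(c)|+|n_2(c)-n_3(c)|$ summed over the finitely many relevant classes. The only difference is level of detail, not substance.
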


\begin{proof}
Properties (a) and (d) are straightforward, property (b) is a
consequence of the last theorem, and property (d) is a consequence of
the triangle inequality of the symmetric difference of multisets.
\end{proof}

The computation of $\sigma$ has at least the same complexity as the $S$-DAG isomorphism problem (because the latter can be decided using $\sigma$), and isomorphism of general DAGs can be reduced to  $S$-DAG isomorphism. Therefore, the problem of deciding whether $\sigma$ can be computed in polynomial time for arbitrary phylogenetic networks remains open. But if we bound the in and out-degree of the nodes, the $S$-DAG isomorphism problem is in P, and therefore $\sigma$ can be computed in polynomial time by performing a simultaneous bottom-up traversal of the two
networks.

\section{Conclusion}
In this paper we have complemented Luay Nakhleh's latest proposal of a
metric $m$ for phylogenetic networks by (a) showing that $m$ separates
distinguishable networks,  and (b) proposing a
modification of its definition that provides a true metric $\sigma$ on
the class of all phylogenetic networks. 
When both distances $m$ and $\sigma$   are applied to phylogenetic trees, they both
yield half the symmetric differences of the sets of (isomorphism
classes of) subtrees.

The measure $m$ can be computed in time  polynomial in the size  of the networks, but since $\sigma$ can be used to decide the isomorphism problem for $S$-DAGs,
we are lead to conjecture that it cannot be computed in polynomial time (as any other dissimilarity measure for phylogenetic networks satisfying the separation property). Any way, $\sigma$ can also computed in polynomial time on subspaces of phylogenetic trees with bounded in and out-degree.

Given a set $S$ of $n\geq 2$ labels, there exists no upper bound for the values of
$\sigma(N_{1},N_{2})$ and $m(N_{1},N_{2})$,
as there exist arbitrarily large phylogenetic networks with $n$ 
leaves and no internal node of any one of them equivalent to an 
internal node of the other one.


\end{document}